\pgfplotsset{compat=1.16} 
\newcommand{\R}{\mathbb{R}}
\DeclarePairedDelimiter{\norm}{\lVert}{\rVert}
\DeclarePairedDelimiterX{\inp}[2]{\langle}{\rangle}{#1, #2}
\newtheorem{thm}{Theorem}[section]
\newtheorem{defn}{Definition}[section]
\newtheorem{rmk}{Remark} 
\newcommand{\ab}{\alpha \beta}
\colorlet{shadecolor}{yellow}
\newcommand{\cm}{\checkmark}
\DeclareMathAlphabet{\mathbbb}{U}{bbold}{m}{n}
\definecolor{color_hybrid_angle_control}{RGB}{247,144,71}
\def\BibTeX{{\rm B\kern-.05em{\sc i\kern-.025em b}\kern-.08em
    T\kern-.1667em\lower.7ex\hbox{E}\kern-.125emX}}
\renewcommand*{\@textcolor}[3]{%
  \protect\leavevmode
  \begingroup
    \color#1{#2}#3%
  \endgroup
}
\begin{document}

\title{Network-Independent Incremental Passivity Conditions for Grid-Forming Inverter Control}

\author{Jared Miller, \IEEEmembership{Member, IEEE}, Maitraya Avadhut Desai, \IEEEmembership{Student Member, IEEE}, Xiuqiang He, \IEEEmembership{Member, IEEE}, Roy S. Smith \IEEEmembership{Life Fellow, IEEE} and Gabriela Hug \IEEEmembership{Senior Member, IEEE}
\thanks{
J. Miller is with the Chair of Mathematical Systems Theory, Department of Mathematics,  University of Stuttgart, Stuttgart, Germany (e-mail: jared.miller@imng.uni-stuttgart.de).}
\thanks{J. Miller and R. S. Smith are with the Automatic Control Laboratory (IfA),  ETH Z\"{u}rich, Physikstrasse 3, 8092, Z\"{u}rich, Switzerland (e-mail: \{jarmiller, rsmith\}@control.ee.ethz.ch).}
\thanks{M. A. Desai and G. Hug are with the Power Systems Laboratory (PSL),  ETH Z\"{u}rich, Physikstrasse 3, 8092, Z\"{u}rich, Switzerland (e-mail: \{desai, hug\}@eeh.ee.ethz.ch.}
\thanks{X. He is with the Automation of Department,  Tsinghua University, 100084, Beijing, China (e-mail: hxq19@tsinghua.org.cn.}}
\maketitle

\begin{abstract}
\label{sec:abstract}
Grid-forming inverters control the power transfer between the AC and DC sides of an electrical grid while maintaining the frequency and voltage of the AC side. This paper focuses on ensuring large-signal stability of an electrical grid with inverter-interfaced renewable sources. 
We prove that the Hybrid-Angle Control (HAC) scheme for grid-forming inverters can exhibit incremental passivity properties between current and voltage at both the AC and DC ports. This incremental passivity can be certified through decentralized conditions.
Inverters operating under HAC can, therefore, be connected to other passive elements (e.g. transmission lines) with an immediate guarantee of global transient stability regardless of the network topology or parameters. Passivity of Hybrid Angle Control is also preserved under small-signal (linearized) analyses, in contrast to conventional proportional droop laws that are passivity-short at low frequencies. Passivity and interconnected-stability properties are demonstrated through an example case study.

\end{abstract}
\begin{IEEEkeywords}
Passivity, Inverter, Grid-Forming, Multi-Inverter, Electronic, Nonlinear Control
\end{IEEEkeywords}

\section{Introduction}
\label{sec:introduction}

The energy transition requires the replacement of carbon-intensive generators with low-carbon renewable energy technologies. The previous  power grid involves a centralized top-down hierarchy, in which energy is generated by large-scale synchronous generators (e.g. hydropower, coal-fired power plants, methane gas turbines, nuclear generation). These traditional large-scale sources output AC power at a synchronous frequency of 50 or 60 Hz. In contrast, many forms of renewable energy produce either DC power or AC power at non-synchronous frequencies, thus necessitating the presence of power inverters.
It is therefore vital to have certificates of stability and performance for a power grid involving both the AC and DC sides, and to ensure that stability will be ensured in a decentralized manner. Grid codes and interconnection rules serve as the primary mechanism for maintaining stability and performance \cite{photovoltaics2018ieee}.

The inverters interlinking the DC and AC sides of the power grid are power electronic devices that produce AC signals based on Pulse-Width-Modulation (PWM) inputs, for which the switching frequency is kept much higher than the nominal grid frequency \cite{hart2011power}. Proper inverter control can ensure the safety and stability of an AC/DC power network. However, designing inverter control strategies is one of the central challenges in delivering a reliable distributed energy system \cite{denis2015review, kroposki2017achieving}.
Inverter control is typically divided into grid-following or grid-forming control \cite{dorfler2023control}. Grid-following control accepts frequency and voltage setpoint inputs (where the frequency is measured through phase-locked loops) and outputs a voltage waveform with specified active and reactive power.
In contrast, grid-forming control accepts setpoint power inputs, and outputs a voltage waveform at a frequency and magnitude to meet the given power requirements \cite{lasseter2019grid}. Grid-forming control is capable of creating a stable voltage signal under islanded operation and recovering after faults, whereas grid-following control must always remain connected to a stable grid frequency. Instances of grid-forming control include droop and virtual synchronous machine control \cite{DarcoVSM}, virtual oscillator control \cite{johnson2013synchronization, colombino2017global},  matching control \cite{arghir2018grid}, and \ac{HAC} \cite{tayyebi2023hac}.

An example of a desirable specification is passivity \cite{schiff1971passivity, van2000l2}. Because the negative feedback interconnection between two passive components remains passive (by the Passivity Theorem), arbitrary interconnection of passive elements in a power grid will remain passive. This passivity property is widely used in DC microgrid control, in which multiple prosumers operating on a common network can possess a certificate of stability \cite{Cucuzzella2019passivity,NAHATA2020108770}. Passivity \cite{nercwhitepaper,fingrid2023} (and, more generally, positive damping \cite{kroposki2022unifi}) is suggested as a desirable property for inverter-based generation.

Unfortunately, global passivity results generally do not apply when considering the AC side of the power grid. The work in \cite{chen2024limitations} reviews obstacles towards the use of passivity as a stabilization method in AC grids, specifically noting that grid-forming inverters are generally non-passive at low frequencies \cite{beza2019identification, zhao2022low}.
Table \ref{tab:comparison-passivity} compares the passivity properties of the existing AC / DC grid-forming inverter control. In particular, dVOC \cite{colombino2017global, he2023nonlinear} may be passivity-short, and thus requires a sufficiently passive admittance matrix for the interconnection in order to ensure stability of the closed-loop network.

\begin{table*}
\centering
\caption{Passivity Properties of Grid-Forming Control}
\arrayrulecolor{black!10}
\begin{tabular}{lcccccc}
\arrayrulecolor{black}
\hline\hline
Scheme & Input       & Output                                        & Nonlinear   & Global & Islanded   & Network-Independent                 \\ \hline
\textbf{HAC}  \cite{tayyebi2023hac}   & $i_{dc}, i_{\ab}$         & $v_{dc}, v_{\ab}$                                           & \checkmark                          &   \checkmark        &                   \checkmark     &  \checkmark    \\
Matching (local $dq$) \cite{arghir2018energy}& $i_{dc}, i_{dq}$         & $v_{dc}, v_{dq}$                                           & \checkmark                          &   \checkmark        &                   \checkmark     &      \\
dVOC \cite{colombino2017global, he2023nonlinear} & $i_{\ab}$ & $v_{\ab}$ & \cm & \cm & \cm &  \\
Matching (infinite bus) \cite[Sec. III]{arghir2019electronic}& $i_{dc}, i_{\ab}$         & $v_{dc}, v_{\ab}$                                           & \checkmark                          &   \checkmark        &                        &   \checkmark   \\
Matching (no infinite bus) \cite[Sec. IV]{arghir2018energy}& $i_{dc}, i_{\ab}$         & $v_{dc}, v_{\ab}$                                           & \checkmark                          &           &                 \checkmark       &   \checkmark   \\
Interlinking \cite{watson2021control} & $\omega, v_{dc}, v_{dq}$ & $i_{dc}, P, Q$ & & & \cm & \cm \\ 
       Bregman (droop laws) \cite{de2017bregman} 
       & $PQ$ setpoint & $(P, \frac{\partial \mathcal{S}}{\partial V})$ & \checkmark &           & \checkmark &  
\\ \hline\hline
\end{tabular}
\label{tab:comparison-passivity}
\end{table*}

This paper proves that the \ac{HAC} framework of \cite{tayyebi2023hac} satisfies large-signal passivity conditions under an appropriate choice of parameters. Thus, we show that HAC, in contrast to other available inverter control strategies, possesses network-independent large-signal passivity-guarantees The \ac{HAC} control strategy is a droop law that chooses a modulation based on the difference of the DC voltage from the setpoint (linearly) and the difference between the modulation phase angle and its corresponding setpoint (nonlinearly through a half-sine transformation). 
The stability proof is based on an incremental Lyapunov function comprising quadratic terms (in DC voltage, AC voltage, and AC inductor current) and a trigonometric term (in angle). By dropping the grid-side states of the  model in \cite{tayyebi2023hac}, we treat the inverter as an open-loop system. Thus, passivity can be proven using an incremental storage function adapted from the incremental Lyapunov function used in \cite{tayyebi2023hac} for the steady-state interconnected case. Retracing the stability arguments results in a proof of incremental passivity under very similar parametric conditions. This incremental passivity result implies stability of multi-inverter interactions between \ac{HAC}-controlled inverters over transmission lines composed of passive elements.
This paper has the following structure: Section~\ref{sec:preliminaries} reviews preliminaries such as  AC-DC power inverter models, HAC, and incremental passivity. Section~\ref{sec:hac_passive} derives large-signal and small-signal conditions under which \ac{HAC} is incrementally passive.
Section~\ref{sec:examples} presents the validation of the passivity result through an electromagnetic transient simulation of a multi-inverter power grid. Section \ref{sec:conclusion} concludes the paper.
\section{Preliminaries}
\label{sec:preliminaries}

\begin{acronym}

\acro{COI}{Center of Inertia}

\acro{dVOC}{Dispatchable Virtual Oscillator Control}

\acro{HAC}{Hybrid-Angle Control}

\acro{SMIB}{Single-Machine Infinite-Bus}

\acro{LMI}{Linear Matrix Inequality}
\acroplural{LMI}[LMIs]{Linear Matrix Inequalities}
\acroindefinite{LMI}{an}{a}

\acro{ODE}{Ordinary Differential Equation}
\acroindefinite{ODE}{an}{a}

\acro{PSD}{Positive Semidefinite}

\acro{PWM}{Pulse-Width Modulation}

\acro{SDE}{Stochastic Differential Equation}
\acroindefinite{SDE}{an}{a}

\acro{SDP}{Semidefinite Program}
\acroindefinite{SDP}{an}{a}

\acro{SOS}{Sum of Squares}
\acroindefinite{SOS}{an}{a}

\acro{VSC}{Voltage Source Converter}

\end{acronym}

\subsection{AC-DC Power Inverter Model and HAC}

The model considered for a three-phase interlinking AC/DC inverter in balanced operation was introduced in \cite{tabesh2008multivariable}. The H-bridge has a 2-level 6-switch topology. The representation of the inverter connected to the DC and AC grid is pictured in Figure \ref{fig:converter_model}.
The DC side is supplied with a current source $i_{\rm {dc}}$. The inverter has a conductance $G_{\rm {dc}}$ and a capacitance $C_{\rm {dc}}$ on its DC side. A current of $i_{\rm x}$ enters the H-bridge (switching), and a voltage of $v_{\rm x}$ is measured across the H-bridge on the AC side. A current of $i_{\alpha \beta}$ leaves the H-bridge, and flows through a resistor with resistance $R$ and an inductor with inductance $L$. A voltage of $v_{\alpha \beta}$ is measured across the AC-side capacitor $C$. An AC current of $i_{\ell}$ enters the AC grid load. Note that this averaged 2-level model can also be used as a base to represent other types of inverter topologies. As an example, \cite[Fig. 5]{gross2022dualport} presents a model for multi-modular inverter topologies.

\begin{figure}
    \centering
    \includegraphics{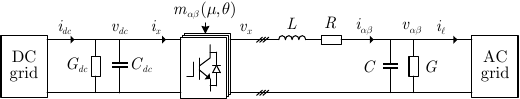}
    \caption{Averaged 2-level Converter Model}
    \label{fig:converter_model}
\end{figure}

The dynamics of the  inverter in the $\ab$ stationary reference frame are (Kirchoff's voltage and current laws) :
\begin{subequations}
\label{eq:ilc}
\begin{align}
    C_{\rm {dc}} \dot{v}_{\rm {dc}} &= -G_{\rm {dc}} v_{\rm {dc}} + i_{\rm {dc}} - i_{\rm {x}}, \\
    C \dot{v}_{\alpha \beta} &= -G v_{\alpha \beta} -i_{\ell} + i_{\alpha \beta}, \\
    L \dot{i}_{\alpha \beta} &= -R i_{\alpha \beta} - v_{\alpha \beta} + v_{\rm x}.
\end{align}
\end{subequations}
Letting $m_{\alpha \beta}(\cdot) \in [-1, 1]$ be a two-dimensional modulation signal and $\top$ denote the transpose of a matrix, the switching current and voltages $i_{\rm x}(t)$ and $v_{\rm x}(t)$ respectively can be modeled as 
\begin{align}
\label{eq:modulation_vi}
    i_{\rm x}(t) &= m_{\alpha \beta}(t)^\top i_{\alpha \beta}(t), & v_{\rm x}(t) &= m_{\alpha \beta} v_{\rm {dc}}(t).
\end{align}
The averaged two-level three-phase inverter model \eqref{eq:modulation_vi} possesses two bilinearities: between the applied modulation and the DC current, and between the applied modulation and the AC voltage.

The \ac{HAC} control law with grid frequency $\omega_0$, DC voltage setpoint $v_{\rm {dc}}^*$ , time-dependent angle setpoint $\theta^*(t)$, and modulation magnitude $\mu \in [0,1]$ determines the modulation $m_{\ab}(t)$ as \cite{tayyebi2023hac}
\begin{align}
    \dot{\theta}(t) &= \omega_0 + \eta(v_{\rm {dc}}(t) - v_{\rm {dc}}^*) - \gamma \sin \left(\frac{\theta(t) - \theta^*(t)}{2}\right) \label{eq:hac_law},\\
    m_{\alpha \beta}(t) &= \mu \begin{bmatrix}
        \cos(\theta(t))  & \sin(\theta(t))
    \end{bmatrix}^\top . \label{eq:modulation}
\end{align}

Specifically, the `hybrid-angle' portion of HAC is the half-angle-feedback law $ \gamma \sin((\theta - \theta^*)/2)$ in \eqref{eq:hac_law}.
\subsection{Incremental Passivity}

This section reviews the concepts of incremental passivity \cite{schiff1971passivity, van2000l2, pavlov2008incremental}. 
Consider a dynamical system with state $x$, input $u$, and output $y$:
\begin{align}
    \dot{x}(t) &= f(t, x(t), u(t)), \label{eq:system_test} & 
    y(t) &= h(x(t)).
\end{align}

\begin{defn}
    The system \eqref{eq:system_test} is \textbf{passive} if there exists a storage function $V_p(x)$ such that for any time horizon $T> 0$ and trajectory $x(t)$ it holds that
    \begin{align}
        V_p(x(T)) - V_p(x(0)) &< \int_{t=0}^T u(t)^\top y(t). \label{eq:passive_int}
        \intertext{If the storage function $V_p$ is continuously-differentiable, then the integral condition \eqref{eq:passive_int} can be replaced by the differential expression}
                \dot{V}_p(x(t)) &< u(t)^\top y(t) \label{eq:passive_int_diff}
    \end{align}
    for any input-output trajectory $(u(t), y(t))$.
\end{defn}

\begin{defn}
    The system in \eqref{eq:system_test} is \textbf{incrementally passive} if there exists a storage function $V(x, \bar{x})$ such that for any two trajectories $(x(t), \bar{x}(t))$ and time horizon $T>0$, the following relation holds:
    \begin{align}
        &V(x(T), \bar{x}(T)) - V(x(0), \bar{x}(0))\nonumber \\
        &\qquad <\int_{t=0}^T (u(t)-\bar{u}(t))^\top(y(t) - \bar{y}(t)) \ dt\label{eq:passive_int_incr}
    \end{align}
    for any two input-output trajectories $(u(t), y(t))$ and $(\bar{u}(t), \bar{y}(t))$.
\end{defn}

If the storage function $V(x, \bar{x})$ is once continuously differentiable, then the integral condition in \eqref{eq:passive_int_incr} can be equivalently treated as 
    \begin{align}
             \dot{V}(x(t), \bar{x}(t)) &< (u(t)-\bar{u}(t))^\top(y(t) - \bar{y}(t)). \label{eq:passive_int_diff_incr}
    \end{align}  

\begin{defn}
    An incrementally passive system is \textbf{output-strict incrementally passive} if there exists a positive-definite function $q(\cdot)$ such that 
    \begin{align}
             \dot{V}(x(t), \bar{x}(t)) &\leq (u(t)-\bar{u}(t))^\top(y(t) - \bar{y}(t))  - q(y(t) - \bar{y}(t))\label{eq:passive_int_strict}
    \end{align}    
    for any two input-output trajectories $(u(t), y(t))$ and $(\bar{u}(t), \bar{y}(t))$.
\end{defn}

\begin{defn}
\label{defn:error_coord}
    The \textbf{error coordinates} $\delta x = x(t) - \bar{x}(t)$, $\delta u = u(t) - \bar{u}(t)$, $\delta y = y(t) - \bar{y}(t)$,  associated with \eqref{eq:system_test} have the dynamics
\begin{align}
    \frac{d}{dt} \delta x(t) &= f(t, x(t), u(t)) - f(t, \bar{x}(t), \bar{u}(t)), \label{eq:system_test_delta} \\
    \delta y(t) &= h(x(t)) - h(\bar{x}(t)). \nonumber
\end{align}
\end{defn}

Passivity of \eqref{eq:system_test_delta} with input $\delta u,$ output $\delta y$, state $\delta x$, and storage $V_\delta(\delta x)$ function implies incremental passivity of \eqref{eq:system_test} with storage function $V(x, \bar{x}) = V_\delta(x - \bar{x}) = V_\delta(\delta x)$.

A linear system $\dot{x} = A_{\rm {lin}}x + B_{\rm {lin}}u, \ y = C_{\rm {lin}}x + D_{\rm {lin}}u$ is incrementally passive  if there exists a quadratic storage function $V(x) = \frac{1}{2}x^T P x$ such that \cite{kalman1962stability}
\begin{align}
    P &\succ 0,     & \begin{bmatrix}
        A_{\rm {lin}}^\top P + PA_{\rm {lin}}  & P B_{\rm {lin}} - C_{\rm {lin}}^\top \\ B_{\rm {lin}}^\top P - C_{\rm {lin}} & -D_{\rm {lin}}-D_{\rm {lin}}^\top
    \end{bmatrix} &\preceq 0.\label{eq:passivity_lmi}
\end{align}
A linear system is passive if and only if it is incrementally passive (by linearity). The Input Feedforward Passivity (IFP) and Output Feedback Passivity (OFP) indices of the linear system $G(s) = C (sI-A)^{-1}B + D$ are
\begin{subequations}
\begin{align}
    \text{IFP}(j \omega) &= \lambda_{\min}(G(e^{j \omega}) + G(e^{j \omega})^H)/2  \\
    \text{OFP}(j \omega) &= \lambda_{\min}(G(e^{j \omega})^{-1} + G(e^{j \omega})^{-H})/2.
\end{align}
\end{subequations}
The linear system is strictly input (output) passive if $\forall \omega \in [0, \infty): \ \text{IFP}(j \omega) > 0$ ($\text{OFP}(j \omega) > 0$).

\section{Incremental Passivity of HAC}
\label{sec:hac_passive}

This section  derives incremental passivity conditions for an inverter with dynamics in \eqref{eq:ilc} where the modulation $m_{\ab}$ is chosen according to the \ac{HAC}  law \eqref{eq:hac_law}. A DC-side current control with reference current $i^{ref}_{dc}$ and proportional DC-voltage gain $\kappa \geq 0$ ($i_{dc} = i_{\rm {dc}}^{\rm {ref}} + \kappa(v_{\rm {dc}}^* - v_{\rm {dc}}) $) is also considered, for which the case of $\kappa=0$ may be treated as a special case of an uncontrolled DC current. This closed-loop system has dynamics of 

\begin{subequations}
\label{eq:interlinked_closed}
    \begin{align}
    C_{\rm {dc}} \dot{v}_{\rm {dc}} &= -G_{\rm {dc}} v_{\rm {dc}} + (i_{\rm {dc}}^{\rm {ref}} + \kappa(v_{\rm {dc}}^* - v_{\rm {dc}})) \\
    & \qquad - \mu [\cos(\theta), \sin(\theta)]^\top i_{\alpha \beta}, \nonumber \\
    C \dot{v}_{\alpha \beta} &= -G v_{\alpha \beta} -i_{\ell} + i_{\alpha \beta}, \\
    L \dot{i}_{\alpha \beta} &= -R i_{\alpha \beta} - v_{\alpha \beta} + \mu  v_{\rm {dc}} [\cos(\theta), \sin(\theta)],  \\
    \dot{\theta} &= \omega_0 + \eta(v_{\rm {dc}} - v_{\rm {dc}}^*) - \gamma \sin \left(\frac{\theta - \theta^*(t)}{2}\right). \label{eq:angle_interlinked}
    \end{align}
\end{subequations}
For the moment, it is assumed that the modulation parameter $\mu$ is constant in time. The modulation  $\mu$ can become time-varying in $[0, 1]$ as in a signal $\mu(t)$ in order to enforce current limitation for safe inverter operation \cite{arghir2019electronic, tayyebi2023hac, he2024cross}. 
We  furthermore
define $\tilde{G}_{\rm {dc}}= G_{\rm {dc}} + \kappa$ as the effective conductance the DC element.

\subsection{Incremental Large-Signal Passivity of HAC}

Incremental passivity of \eqref{eq:interlinked_closed} is taken with respect to the following definitions:
\begin{subequations}
\label{eq:hac_set_states}
\begin{align}
    \text{States:} & & x &= [v_{\rm {dc}}, v_{\alpha \beta}, i_{\alpha \beta}, \theta] \\
    \text{Inputs:} & & u &= [i_{\rm {dc}}^{\rm {ref}}, -i_{\ell}] \\
    \text{Outputs:} & & y &= [v_{\rm {dc}}, v_{\alpha \beta}].
\end{align}
\end{subequations}

The main contribution of this paper is the following theorem of incremental passivity:
\begin{thm}
\label{thm:passive_hac}
    A sufficient condition for the  HAC-controlled inverter system in \eqref{eq:interlinked_closed} to be  incrementally passive with parameter gains $(\eta, \gamma)$ is if there exists $\epsilon_1, \epsilon_2 >0$ with
    \begin{align}
\label{eq:hac_passivity_conditions}        \epsilon_2^2 & < R,       \qquad \epsilon_1^2  < \tilde{G}_{\rm {dc}}/( \norm{\bar{i}_{\ab}} \mu)^2, \\
     (\lambda \eta/2)^2 & < \left(\lambda \gamma  - (1/\epsilon_1^2) - (\bar{v}_{\rm {dc}} \mu/\epsilon_2)^2\right) \left(\tilde{G}_{\rm {dc}}- (\epsilon_1 \norm{\bar{i}_{\ab}} \mu)^2\right). \nonumber
    \end{align}
    
\end{thm}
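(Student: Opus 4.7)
The plan is to construct an incremental storage function by adapting the Lyapunov function of the HAC stability analysis in \cite{tayyebi2023hac}, to differentiate it along the error dynamics of Definition~\ref{defn:error_coord}, and to invoke Young's inequality (parametrized by $\epsilon_1,\epsilon_2$) to absorb the bilinear modulation cross terms into the diagonal dissipation. Writing $\delta\theta=\theta-\bar\theta$ and $c(\theta)=[\cos\theta,\sin\theta]^\top$, I would propose
\begin{align*}
    V(x,\bar x) &= \tfrac{1}{2}C_{\rm {dc}}(\delta v_{\rm {dc}})^2 + \tfrac{1}{2}C\norm{\delta v_{\ab}}^2 \\
    &\quad + \tfrac{1}{2}L\norm{\delta i_{\ab}}^2 + \lambda\bigl(1-\cos(\delta\theta/2)\bigr),
\end{align*}
where $\lambda>0$ is a free gain whose existence is absorbed into \eqref{eq:hac_passivity_conditions}; each summand is non-negative, matching the quadratic-plus-trigonometric structure flagged in the introduction.

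First, I would form the error dynamics for \eqref{eq:interlinked_closed}. The only non-trivial parts are the two bilinear modulation products $\mu c(\theta)^\top i_{\ab}$ and $\mu v_{\rm {dc}}c(\theta)$ appearing in the DC and inductor equations, together with the HAC half-angle sine in \eqref{eq:angle_interlinked}. Each bilinear difference would be split as $\mu c(\theta)^\top\delta i_{\ab}+\mu(c(\theta)-c(\bar\theta))^\top \bar i_{\ab}$ (and analogously on the inductor side), after which the sum-to-product identity $c(\theta)-c(\bar\theta)=2\sin(\delta\theta/2)\,e(\theta,\bar\theta)$, with $e$ a unit vector orthogonal to $c((\theta+\bar\theta)/2)$, isolates a factor $\sin(\delta\theta/2)$ that can be matched against the storage derivative in the angle coordinate. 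An analogous identity handles the HAC half-angle sine difference in $\delta\dot\theta$.

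Differentiating $V$ along the error trajectories produces the supply rate $\delta u^\top\delta y$ (from $\delta i_{\rm {dc}}^{\rm {ref}}$ and $-\delta i_\ell$) plus three groups of terms: the diagonal dissipation $-\tilde G_{\rm {dc}}(\delta v_{\rm {dc}})^2 - G\norm{\delta v_{\ab}}^2 - R\norm{\delta i_{\ab}}^2$; an angle-storage contribution containing $(\lambda\eta/2)\sin(\delta\theta/2)\,\delta v_{\rm {dc}}$ together with a negative $\lambda\gamma$-weighted contribution from the sine difference; and two sign-indefinite cross terms proportional to $\mu\bar v_{\rm {dc}}\sin(\delta\theta/2)\,e^\top\delta i_{\ab}$ and $\mu\sin(\delta\theta/2)\,e^\top\bar i_{\ab}\,\delta v_{\rm {dc}}$ arising from the bilinear splittings. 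Each cross term would be Young-split with parameters $\epsilon_2$ (current/angle) and $\epsilon_1$ (DC-voltage/angle): the quadratic contributions in $\delta v_{\rm {dc}}$ and $\delta i_{\ab}$ that result subtract from the diagonal---this is where the first line of \eqref{eq:hac_passivity_conditions} arises, ensuring that the net coefficients of $(\delta v_{\rm {dc}})^2$ and $\norm{\delta i_{\ab}}^2$ remain strictly negative---while residual terms of size $1/\epsilon_1^2$ and $(\mu\bar v_{\rm {dc}}/\epsilon_2)^2$, each multiplying $\sin^2(\delta\theta/2)$, compound against the $\gamma$-contribution.

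The hard step is turning the raw $\gamma$-term into a clean $-\lambda\gamma\sin^2(\delta\theta/2)$ bound. Using $\sin((\theta-\theta^*)/2)-\sin((\bar\theta-\theta^*)/2)=2\cos((\theta+\bar\theta-2\theta^*)/4)\sin(\delta\theta/4)$ together with $\sin(\delta\theta/2)=2\sin(\delta\theta/4)\cos(\delta\theta/4)$ leaves an angle-dependent cosine factor that must be bounded below by a positive constant; the HAC operating region where $|\theta-\theta^*|$ and $|\bar\theta-\theta^*|$ stay bounded away from $\pi$ supplies this bound, mirroring the region-of-attraction step in \cite{tayyebi2023hac}. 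Once this is in hand, what remains of $\dot V$ is a $2\times 2$ quadratic form in $(\delta v_{\rm {dc}},\sin(\delta\theta/2))$ whose negative-semidefiniteness is, by the Schur complement applied against the slacks left over from the two Young's inequalities, precisely the second line of \eqref{eq:hac_passivity_conditions}. Collecting all bounds yields $\dot V\leq \delta u^\top\delta y$, which is the incremental passivity condition \eqref{eq:passive_int_diff_incr}.
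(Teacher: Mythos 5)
Your overall route is the paper's route: the same quadratic-plus-trigonometric incremental storage function (the paper uses $2\lambda(1-\cos(\delta\theta/2))$ rather than your $\lambda(1-\cos(\delta\theta/2))$, a factor of two that only rescales the free parameter $\lambda$ but is needed to land on the exact coefficients $\lambda\eta/2$ and $\lambda\gamma$ in \eqref{eq:hac_passivity_conditions}), the same splitting of the two bilinear modulation terms with the surviving cross terms $\mu\,\delta\psi^\top\bar{i}_{\ab}\,\delta v_{\rm{dc}}$ and $\mu\,\bar{v}_{\rm{dc}}\,\delta\psi^\top\delta i_{\ab}$, the same two Young's inequalities in $\epsilon_1,\epsilon_2$ using $\norm{\delta\psi}^2=4\sin^2(\delta\theta/2)$, and the same final quadratic form in $(\delta v_{\rm{dc}},\delta v_{\ab},\delta i_{\ab},\sin(\delta\theta/2))$ whose positive definiteness is \eqref{eq:hac_passivity_conditions} via Schur complement.

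The one place you genuinely diverge is the step you correctly identify as ``the hard step,'' and there your conclusion overclaims. The paper does not face this step at all: it assumes the comparison trajectory is the setpoint trajectory, $\bar{\theta}(t)=\theta^*(t)$, so the difference of the half-angle feedback terms is \emph{exactly} $\sin(\delta\theta/2)$ and the storage derivative contains $-\lambda\gamma\sin^2(\delta\theta/2)$ with no residual factor and no restriction on the operating region (this is why \eqref{eq:interlinked_closed_error} has $\dot{\delta\theta}=\eta\,\delta v_{\rm{dc}}-\gamma\sin(\delta\theta/2)$). You instead treat two arbitrary trajectories, apply sum-to-product, and must lower-bound the factor $\cos((\theta+\bar\theta-2\theta^*)/4)$ on a region where the angles stay away from $\pi$. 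That is more faithful to the literal Definition 2.2 of incremental passivity, but it means (a) the $\lambda\gamma$ entry in the final $2\times 2$ block is replaced by $\lambda\gamma c$ for a region-dependent constant $c<1$, so what you obtain is not ``precisely the second line of \eqref{eq:hac_passivity_conditions}'' as claimed, and (b) the certificate becomes regional rather than global. Either adopt the paper's restriction $\bar\theta=\theta^*$ (proving passivity of the error system relative to the setpoint trajectory, which is what the stated conditions actually certify) or carry the constant $c$ explicitly through the final Schur complement; as written, the last sentence of your argument asserts an identity that your own intermediate bound does not deliver.
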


\begin{proof}
Let $x(t), \bar{x}(t)$ be any two trajectories of \eqref{eq:interlinked_closed}.
It is assumed that the reference values $\bar{v}_{\rm {dc}}$ and $\bar{\theta}(t)$ are known by the inverter (as the setpoints $v_{\rm {dc}}^*, \theta^*(t)$) and satisfy the power flow equations, but that all other terms in $\bar{x}(t)$ are unknown.
Furthermore, let $\psi$ denote the trigonometric term $\psi(\theta) = [\cos(\theta), \sin(\theta)]^\top$ (in which the $\theta$ dependence will be omitted).
The error dynamics (via Definition \ref{defn:error_coord}) of $\delta{x} = x - \bar{x}(t)$ with $\delta{\psi} = \psi(\theta) - \psi(\bar{\theta})$ are given as
\begin{subequations}
\label{eq:interlinked_closed_error}
    \begin{align}
    C_{\rm {dc}}\frac{d\delta{v}_{\rm {dc}}}{dt} &= -\tilde{G}_{\rm {dc}} \delta{v}_{\rm {dc}} + \delta{i}_{\rm {dc}}^{\rm {ref}} - \mu  (\delta \psi^\top \bar{i}_{\alpha \beta} + \psi^\top \delta{i}_{\alpha \beta}), \\
    C \frac{d\delta{v}_{\ab}}{dt} &= -G \delta{v}_{\ab} + \delta{i}_{\ab} - \delta{i}_{\ell}, \\
    L \frac{d\delta{i}_{\ab}}{dt}&= -R \delta{i}_{\ab} - \delta{v}_{\ab} + \mu (\psi \delta{v}_{\rm {dc}} + \delta{\psi}\bar{v}_{\rm {dc}}), \\
    \frac{d\delta{\theta}}{dt} &= \eta \delta{v}_{\rm {dc}} - \gamma \sin(\delta{\theta}/2).
    \end{align}
\end{subequations}

An incremental storage function candidate for dynamics in \eqref{eq:interlinked_closed_error} with a free parameter $\lambda \in \R_{>0}$ is
\begin{align}
    V(\delta{x}) =& \frac{1}{2}(C_{\rm {dc}} \delta{v}_{\rm {dc}}^2 +C \norm{\delta{v}_{\alpha \beta}}^2 + L \norm{\delta{i}_{\alpha \beta}}^2)\nonumber\\
    &+ 2 \lambda(1-\cos(\delta{\theta}/2)).\label{eq:incremental_storage}
\end{align}

The time derivative of the incremental storage function $V(\delta x)$ from \eqref{eq:incremental_storage} is 

\begin{align}
    \dot{V}(\delta {x}) &= -\tilde{G}_{\rm {dc}} \delta{v}_{\rm {dc}}^2 - G \norm{\delta{v}_{\ab}}_2^2 - R \norm{\delta{i}_{\ab}}_2^2 \label{eq:hac_storage_vdot}\\
    &+ \delta{v_{\rm {dc}}} \delta i_{\rm {dc}}^{\rm {ref}} + \delta{v_{\ab}}^\top (-\delta{i_{\ell}}) \nonumber\\
    &+ \mu(\delta{i}_{\ab}^\top \delta{\psi} \, \bar{v}_{\rm {dc}} - (\delta{\psi}^\top \, \bar{i}_{\ab})  \delta{v}_{\rm {dc}}) \nonumber \\ 
    &+ \lambda \eta  \ \delta{v}_{\rm {dc}} \sin(\delta{\theta}/2) - \lambda \gamma \sin^2(\delta{\theta}/2).  \nonumber
\end{align}
Young's inequality can be used to upper-bound the storage function \eqref{eq:hac_storage_vdot} 
while only knowing bounds on $\bar{v}_{\rm {dc}}$ and $\norm{\bar{i}_{\ab}}$.

Thus, the problematic terms $(\bar{v}_{\rm {dc}}, \bar{i}_{\alpha \beta})$ can be replaced with their norms and with values $\epsilon_1, \epsilon_2$ as (from equations (17a) and (17b) of \cite{tayyebi2023hac} with $\norm{\delta \psi}_2^2 = 4\sin^2(\delta \theta/2).$):
    \begin{align}
     (-\delta{\psi})^\top \, (\mu \bar{i}_{\ab}  \delta{v}_{\rm {dc}}) &\leq ( \mu \norm{\bar{i}_{\ab}} \epsilon_1)^2 \delta v_{\rm {dc}}^2 + 1/(4 \epsilon_1^2) \norm{\delta \psi}_2^2, \nonumber\\
     &=  ( \mu \norm{\bar{i}_{\ab}} \epsilon_1)^2 \delta v_{\rm {dc}}^2 + 1/(\epsilon_1^2) \sin^2(\delta \theta/2), \nonumber\\
        (\delta{i}_{\ab})^\top \, (\mu \delta{\psi} \, \bar{v}_{\rm {dc}}) &\leq (\epsilon_2 )^2 \norm{\delta i_{\ab}}^2_2 + (\mu \bar{v}_{\rm {dc}}/\epsilon_2)^2/4  \norm{\delta \psi}_2^2 \nonumber \\
        &=(\epsilon_2 )^2 \norm{\delta i_{\ab}}^2_2 + (\mu \bar{v}_{\rm {dc}}/\epsilon_2)^2  \sin(\delta \theta/2)^2.  
    \end{align}
Under the definition of 
\begin{align}
    \zeta = [\delta v_{\rm {dc}}, \delta v_{\ab}, \delta i_{\ab}, \sin(\delta \theta/2)], \label{eq:zeta}
\end{align} the upper-bound on $\dot{V}$ can be expressed as 
\begin{subequations}
\label{eq:vdot_upper_expand_dc}
    \begin{align}
        \dot{V}(\delta{x}) &\leq  -\zeta^\top \mathcal{Q} \zeta + y^\top u, \\
        \mathcal{Q} &= \begin{bmatrix}
            \tilde{G}_{\rm {dc}}  - (\epsilon_1 \norm{\bar{i}_{\ab}}\mu)^2 & 0 & 0 & -\lambda \eta/2 \\
            0 & G I & 0 & 0 \\
            0 & 0 & (R - \epsilon_2^2)I & 0 \\
            -\lambda \eta/2 & 0 & 0 & \Lambda
        \end{bmatrix} \label{eq:vdot_upper_Q_dc} \\
        \Lambda &= \lambda \gamma  - (1/\epsilon_1^2) - (\mu\bar{v}_{\rm {dc}} /\epsilon_2)^2. \label{eq:vdot_upper_lam}        
    \end{align}
\end{subequations}

The conditions in \eqref{eq:hac_passivity_conditions} are an equivalent expression for the constraint $\mathcal{Q} \succ 0$ from \eqref{eq:vdot_upper_Q_dc}.

\end{proof}

\begin{rmk}
    The conditions in \eqref{eq:hac_passivity_conditions} can be interpreted as follows: The first two inequalities ensure that the dissipation in the AC filter  (resp. DC link) is sufficiently large when compared with the worst case power exchange on the AC  (resp. DC) side of the inverter. The third inequality insists that the net damping injected into the angle dynamics must be larger than the residual non-dissipative couplings from the DC and AC subsystems.
\end{rmk}

\begin{rmk}
    The incremental passivity result in Theorem \ref{thm:passive_hac} is sufficient but not necessary to prove the incremental stability of a system under passive interconnection. Condition \eqref{eq:hac_passivity_conditions} is a decentralized statement valid on each HAC-inverter, because the stability certificate is independent of the precise interconnecting topology and admittance.    
\end{rmk}

\subsection{Incremental Small-Signal Passivity of HAC}

This subsection  explores the small-signal passivity properties of \ac{HAC}.  The system is linearized in the global $dq$ coordinates about a rotating frame with constant frequency $\omega_0$. The angle is therefore written in a rotating frame as $\theta \leftarrow \theta - \omega_0 t$. We define $J$ as the rotation matrix:
\begin{align}
    J = \begin{bmatrix}
        0 & 1 \\ -1 & 0
    \end{bmatrix}.
\end{align}The error dynamics \eqref{eq:interlinked_closed_error} in the $dq$ coordinates are \begin{subequations}
\label{eq:interlinked_closed_dq}
    \begin{align}
    C_{\rm {dc}}\frac{d\delta{v}_{\rm {dc}}}{dt} &= -\tilde{G}_{\rm {dc}} \delta{v}_{\rm {dc}} + \delta{i}_{\rm {dc}}^{\rm {ref}} - \mu  (\delta \psi^\top \bar{i}_{\rm {dq}} + \psi^\top \delta{i}_{\rm {dq}}), \\
    C \frac{d\delta{v}_{\rm {dq}}}{dt} &= -G \delta{v}_{\rm {dq}} + \delta{i}_{\rm {dq}} - \delta{i}_{\ell} - C \omega_0 J \delta v_{\rm {dq}},\\
    L \frac{d\delta{i}_{\rm {dq}}}{dt}&= -R \delta{i}_{\rm {dq}} - \delta{v}_{\rm {dq}} + \mu (\psi \delta{v}_{\rm {dc}} + \delta{\psi}\bar{v}_{\rm {dc}})  \nonumber\\
    & \qquad - L \omega_0 J \delta i_{\rm {dq}},\\
    \frac{d\delta{\theta}}{dt} &= \eta \delta{v}_{\rm {dc}} - \gamma \sin(\delta{\theta}/2).
    \end{align}
\end{subequations}

The error dynamics in \eqref{eq:interlinked_closed_dq} can be linearized about the equilibrium $(v_{\rm {dc}}^*, v_{\rm {dq}}^*, i_{\rm {dq}}^*, \theta^*)$, in which the setpoint can be interpreted as the steady-state alternate trajectory $\bar{x}(t)$ in the $dq$ frame. Defining the half-angle $\phi$ with  $\delta \phi = \delta \theta /2$, the  linearization of \eqref{eq:interlinked_closed_dq}  can be described using the incremental states, inputs,  and outputs of
\begin{subequations}
\begin{align}
    x_{\rm {lin}} &= \begin{bmatrix}
        \delta v_{\rm {dc}} & \delta v_{\rm {dq}} & \delta i_{\rm {dq}}^{\rm {ref}} & \delta \phi
    \end{bmatrix}^\top, \\
    u_{\rm {lin}} &= \begin{bmatrix}
        \delta i_{\rm {dc}} &  -\delta i_\ell
    \end{bmatrix}^\top, \\
    y_{\rm {lin}} &= \begin{bmatrix}
        \delta v_{\rm {dc}} & \delta v_{\rm {dq}}
    \end{bmatrix}^\top.
\end{align}
\end{subequations}
We define the steady-state trigonometric lift of the angle setpoint $\theta^*$ as  $\psi_* = [\cos(\theta^*)\,  \sin(\theta^*)]^\top$. Furthermore we also define a dynamics matrix $\bar{A}$ and the scaling matrix $S$ as
\begin{subequations}
\begin{align}
   \bar{A} &=  \\
   & \begin{bmatrix}
        -\tilde{G}_{\rm {dc}} & 0 & -\mu \psi^\top_* & -2 \mu  (J \psi_*)^\top i_{\rm {dq}}^* \\
        0 & -G I - C \omega_0 J & I & 0 \\
        \mu \psi_* & -I & -R I - L \omega_0 J & 2 \mu v_{\rm {dc}}^* J \psi_* \\
        \eta/2 & 0 & 0 & -\gamma/2
    \end{bmatrix}, \nonumber \\
    S &=\textrm{diag}(C_{\rm {dc}}, C, C, L, L, 1).
\end{align}
\end{subequations}
The linearized dynamics of \eqref{eq:interlinked_closed_dq} can be therefore written as
\begin{align}
    \dot{x}_{\rm {lin}} &= A_{\rm {lin}} x_{\rm {lin}} + B_{\rm {lin}} u_{\rm {lin}} \label{eq:interlinked_closed_dq_lin}\\
    y_{\rm {lin}} &= C_{\rm {lin}}x_{\rm {lin}} + 0 u_{\rm {lin}},\nonumber
\end{align}
under the definitions
\begin{align}
    A_{\rm {lin}} &= \label{eq:linearized_matrices}
    S^{-1} \bar{A},  \\
    B_{\rm {lin}} &= \begin{bmatrix}
        1/C_{\rm {dc}} & 0 \\
        0 & (1/C) I \\
        0 & 0 \\
        0 & 0
    \end{bmatrix}, \quad 
    C_{\rm {lin}} = \begin{bmatrix}
        1 & 0 & 0 & 0 \\
        0 & I & 0 & 0
    \end{bmatrix}.    \nonumber 
\end{align}

We now proceed to provide conditions for passivity of \eqref{eq:interlinked_closed_dq_lin}, \eqref{eq:linearized_matrices} via the LMI expression in \eqref{eq:passivity_lmi}.
A quadratic storage function $V(x_{\rm {lin}})$ can be defined as 
\begin{align}
    V(x_{\rm {lin}}) &= \frac{1}{2}x_{\rm {lin}}^\top P x_{\rm {lin}}, 
 \label{eq:passivity_storage}\\
    P &= \begin{bmatrix}
        C_{\rm {dc}}  & 0 & 0 & 0 \\
        0 & C I & 0 & 0 \\
        0 & 0 & L I & 0 \\
        0 & 0 & 0 & 2\lambda
    \end{bmatrix}. \label{eq:passivity_storage_P}
\end{align}

Noting that the matrices in \eqref{eq:linearized_matrices} and \eqref{eq:passivity_storage_P} satisfy
\begin{align}
    B_{\rm {lin}}^\top P - C_{\rm {lin}} &= 0, & D_{\rm {lin}} = 0,
\end{align}
and that $P \succ 0$ if $\lambda > 0$, the passivity constraint via the expression in \eqref{eq:passivity_lmi} can be placed in the form of
    $(A_{\rm {lin}}^\top P + P A_{\rm {lin}})/2 \preceq 0$.
    
    After defining the quantity $\tau_*(\lambda, \eta) = (\lambda \eta - 2\mu(J \psi_*)^\top i_{\rm {dq}}^*)/2$, the above  LMI can be written as
    \begin{align}    
    \begin{bmatrix}
        \tilde{G}_{\rm {dc}} & 0 & 0 &  \tau_*(\lambda, \eta)\\
        0 & GI & 0 & 0 \\
        0 & 0 & RI & -\mu v_{\rm {dc}}^* J \psi_* \\
        \tau_*(\lambda, \eta) & 0 & -\mu v_{\rm {dc}}^* (J \psi_*)^\top  &         \lambda \gamma
    \end{bmatrix} \succeq 0. \label{eq:lmi_con}
\end{align}

In the case where $\psi_*$ and $i^*_{\text{dq}}$ are unknown while the magnitude $\norm{i^*_{\rm {dq}}}$  is known  (noting that $\norm{J \psi_*} = 1$), the off-diagonal terms involving $\psi_*$ and $i^*_{\text{dq}}$ can be bounded by Young's inequality:
   \begin{align}
    \mu(J \psi_*^\top \, i^*_{\rm {dq}})  \delta{v}_{\rm {dc}} (2\delta \theta) &\leq ( \mu \norm{i_{\rm {dq}}^*} \epsilon_1)^2 \delta v_{\rm {dc}}^2 + (2 \delta \theta)^2/(4 \epsilon_1^2),   \nonumber \\
        \mu(\delta{i}_{\rm {dq}}^\top \, J \psi_*) (2 \delta \theta) \, \bar{v}_{\rm {dc}} &\leq (\epsilon_2 )^2 \norm{\delta i_{\rm {dq}}}^2_2 + (2 \delta \theta)^2(\mu \bar{v}_{\rm {dc}}/\epsilon_2)^2/4. \label{eq:eps_bound_lin}
    \end{align}

A lower-bound  on $\dot{V}_{\rm {lin}}$ can therefore be found as 
\begin{subequations}
\label{eq:vdot_upper_expand_lin}
    \begin{align}
        \dot{V}_{\rm {lin}}(\delta{x}) &\leq  -x_{\rm {lin}}^\top \mathcal{Q}_{\rm {lin}} x_{\rm {lin}} + y_{\rm {lin}}^\top u_{\rm {lin}}, \\    
        \mathcal{Q}_{\rm {lin}} &= \begin{bmatrix}
            \tilde{G}_{\rm {dc}} - (\epsilon_1 \norm{i^*_{\rm {dq}}}\mu)^2 & 0 & 0 & -\lambda \eta/2 \\
            0 & GI & 0 & 0 \\
            0 & 0 & (R - \epsilon_2^2)I & 0 \\
           - \lambda \eta/2 & 0 & 0 & \Lambda_{\rm {lin}}
        \end{bmatrix} \label{eq:vdot_upper_Q_lin} \\
        \Lambda_{\rm {lin}} &= \lambda \gamma  - (1/\epsilon_1^2) - (\bar{v}_{\rm {dc}} \mu/\epsilon_2)^2. \label{eq:vdot_upper_lam_lin}
    \end{align}
\end{subequations}

Passivity conditions for the linearized system in $x_{\text{lin}}$ are
    \begin{align}
\label{eq:hac_passivity_conditions_lin}        &\epsilon_2^2  < R,       \qquad \epsilon_1^2  < \tilde{G}_{\rm {dc}}/( \norm{\bar{i}_{\rm {dq}}} \mu)^2, \\
 &   (\lambda \eta/2)^2  <   \left(\lambda \gamma  - (1/\epsilon_1^2) - (\bar{v}_{\rm {dc}} \mu/\epsilon_2)^2\right) \left(\tilde{G}_{\rm {dc}} - (\epsilon_1 \norm{\bar{i}_{\rm{dq}}} \mu)^2\right), \nonumber
    \end{align}
    which are identical to the large-signal conditions in \eqref{eq:hac_passivity_conditions} for the nonlinear system in $x$.
\section{Simulation Validation}

\label{sec:examples}

\begin{table*}
\centering
\caption{Parameters in the Simulation Study}
 \begin{minipage}{\linewidth}
 \begin{center}
\begin{tabular}[t]{lllll}
\arrayrulecolor{black}
\hline \hline
Symbol                 & Description       & Inv. 1 & Inv. 2 & Inv. 3               \\
\hline
$S_N$       & Nominal capacity  & $247.5$\,MVA\,& $192$\,MVA & $128$\,MVA \\
$V_{ll}$    & Nominal Line-to-Line Voltage & $0.69$\,kV & $0.69$\,kV & $0.69$\,kV\\
$V_{dc}$    & Nominal DC Voltage & $1.13$\,kV & $1.13$\,kV & $1.13$\,kV\\
$C_{dc}$ & DC Capacitance & $8.07$,\,F & $14.44$\,F & $5.78$\,F \\
$G_{dc}$ & DC Conductance & $0.19$,\,S & $0.15$\,S & $0.10$\,S \\
$L_\mathrm{f}$      & Filter inductance & $0.05$\,p.u. & $0.05$\,p.u. & $0.05$\,p.u. \\
$R_\mathrm{f}$      & Filter resistance & $0.05/30$\,p.u. & $0.05/30$\,p.u. & $0.05/30$\,p.u.\\
$C_\mathrm{f}$      & Filter capacitance & $0.05$\,p.u. & $0.05$\,p.u. & $0.05$\,p.u. \\
$\theta^{\star}$ & Angle setpoint & $0.0108$\,rad & $0.0108$\,rad & $0.0108$\,rad\\
$v_{dc}^{\star}$ & DC voltage setpoint & $1.00$\,p.u.& $1.00$\,p.u. & $1.00$\,p.u. \\
$\eta$ & DC voltage Gain & $0.001$\,rad/sV & $0.001$\,rad/sV &$0.001$\,rad/sV \\
$\gamma$ & Angle term Gain & $100$\,rad/s & $100$\,rad/s & $100$\,rad/s \\
$\kappa$ & DC-side gain & $1.9494 \times 10^{4}$\,S & $1.5123 \times 10^{4}$\,S & $1.0082 \times 10^{4}$\,S \\
\hline \hline
\end{tabular}
 \end{center}
\end{minipage}
\label{tab:system-parameters}
\end{table*}
This section validates the passivity properties of the \ac{HAC} with the help of a electromagnetic transient simulation \footnote{\url{https://doi.org/10.3929/ethz-b-000705638}}. We consider the IEEE 9-bus test system shown in Figure~\ref{fig:simulation_system}. The synchronous machines in the original system are replaced by \ac{HAC} inverters.
\begin{figure}[h]
    \centering
    \includegraphics{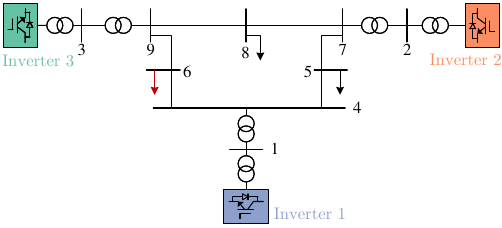}
    \caption{IEEE 9-bus system with HAC-controlled inverters connected to buses 1, 2 and 3.}
    \label{fig:simulation_system}
\end{figure}

The parameters of the system are provided in Table~\ref{tab:system-parameters} in SI units or the per-unit system (calculated on the respective base values). It should be noted that the gains of the \ac{HAC} are tuned to obey the nonlinear decentralized passivity condition in \eqref{eq:hac_passivity_conditions}. The parameters and certificates for inverter 3 are
\begin{align*}
    \eta &= 10^{-3} & \gamma &= 10^2 & \kappa &= 1.0082 \times 10^{4} \\
    \lambda &= 10^{10} & \epsilon_1 &= 2.2097\times 10^{-4} & \epsilon_2 &= 1.4375\times 10^{-3}.
\end{align*}
Figure \ref{fig:passivity_small_signal} plots the non-negative IFP and OFP for inverter 2 as a function of frequency. Output-strict passivity is achieved, because the passivity indices are always bounded above 0.
\begin{figure}[h]
    \centering
    \includegraphics[width=0.85\linewidth]{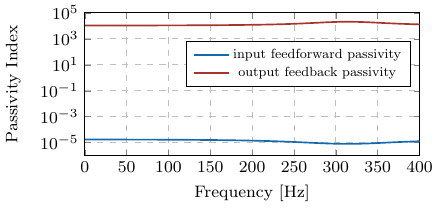}
    \caption{Small-Signal Passivity of Inverter 2}
    \label{fig:passivity_small_signal}
\end{figure}

To test the system, the load at bus~6 is \emph{doubled} at $t=1.5$ seconds from an initial consumption of 125 MW and 50 MVAr. As seen from Figure~\ref{fig:Area1_Results}, the system stabilizes following the load disturbance owing to the fact that the gains are tuned to ensure incremental passivity of the \ac{HAC} law.  
\begin{figure}
    \centering
    \includegraphics[width=\linewidth]{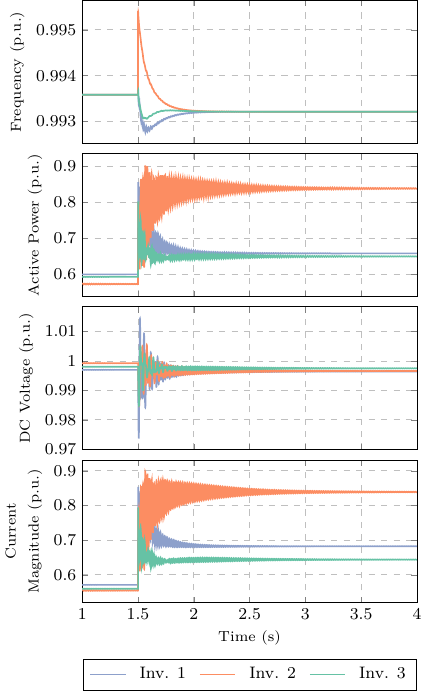}
    \caption{Simulation results for a load disturbance at Bus 6.}
    \label{fig:Area1_Results}
\end{figure}
\section{Conclusion}

\label{sec:conclusion}

This work proved that the grid-forming \ac{HAC} scheme of \cite{tayyebi2023hac} can satisfy an incremental passivity property from the current to the voltage. It explored the large-signal and small-signal passivity properties of \ac{HAC} and presented decentralized parametric conditions for which passivity is ensured. The theoretical findings of passivity were  confirmed in simulation.

The dominant obstacle of HAC as a grid-forming control method is the requirement that the angle setpoint $\theta^*(t)$ must be known. Future work will investigate how to approximate the angle setpoint from recorded $i/v$ signals, for instance, from a Phasor Measurement Unit (PMU), while retaining passivity guarantees. Other opportunities include the investigation of passivity for single-phase or unbalanced networks, determining decentralized passivity guarantees under current limits (through reachability methods), and experimentation on real physical systems.

\section*{Acknowledgements}

The authors would like to thank Florian D\"{o}rfler, Linbin Huang, Ali Tayebbi, Adolfo Anta, Irina Subot\'{i}c, Dominic Gross, Saverio Bolognani, Jean-Sebastian Broullion, and Catalin Arghir for discussions about grid-forming control.

\bibliographystyle{IEEEtran}
\bibliography{references}

\end{document}